\def\bb0{{\mathbb{0}}}
\def\ba{{\mathbf{a}}}
\def\bb{{\mathbf{b}}}
\def\b0{{\mathbf{0}}}
\def\bS{{\mathbf{S}}}
\def\b1{{\mathbf{1}}}
\def\bbE{{\mathbb{E}}}
\def\sf0{{\mathsf{0}}}
\newtheorem{theorem}{Theorem}
\newtheorem{Lemma}[theorem]{Lemma}
\newtheorem{definition}[theorem]{Definition}
\pgfplotsset{compat=1.4}
\newcommand{\M}{\mathcal{M}}
\newcommand{\A}{\mathcal{A}}
\newcommand{\off}{\text{off}}
\newcommand{\worst}{\text{worst}}
\begin{document}
\title{Sub-Modularity of Waterfilling with Applications to Online
  Basestation Allocation}
\author{\IEEEauthorblockN{Kiran Thekumparampil and Andrew Thangaraj}
\IEEEauthorblockA{Department of Electrical Engineering\\
Indian Institute of Technology Madras, Chennai, India\\
Email: ee10b121,andrew@ee.iitm.ac.in}
\and
\IEEEauthorblockN{Rahul Vaze}
\IEEEauthorblockA{School of Technology and Computer Science\\
Tata Institute of Fundamental Research, Mumbai, India\\
Email: vaze@tcs.tifr.res.in}}
\maketitle
\begin{abstract} We show that the popular water-filling
  algorithm for maximizing the mutual information in parallel Gaussian
  channels is sub-modular.  The sub-modularity of water-filling
  algorithm is then used to derive \emph{online} basestation
  allocation algorithms, where mobile users are assigned to
  one of many possible basestations immediately and irrevocably 
  upon arrival without knowing the future user information. The goal of
  the allocation is to maximize the sum-rate of the system under power
  allocation at each basestation. We present online algorithms with competitive
  ratio of at most 2 when compared to offline algorithms that have knowledge of all future user arrivals. 
\end{abstract}
\section{Introduction}
\label{sec:introduction}
In  combinatorial optimization, sub-modular functions play the role of convex functions in continuous optimization. For a
sub-modular function, the incremental gain from adding an extra
element in the set decreases with the size of the set. The interest in sub-modular functions is because results in combinatorial optimization show that greedy algorithms are close to optimal algorithms with provable guarantees \cite{Nemhauser1978, Lehmann2006combinatorial}.

In this paper, we show that the water-filling algorithm for maximizing
the mutual information in parallel Gaussian channels is
sub-modular. Log-based utility functions arising from the
capacity of a Gaussian channel are used in resource
allocation for which water-filling is the optimal solution. Thus, the
sub-modularity of water-filling has
widespread applications in combinatorial resource allocation. We present one example on online basestation allocation of
mobile users to basestations with $\log$-utility based power allocation.

Specifically, we consider the {\it online} downlink basestation association
problem, where each user on its arrival reveals its SNRs to each of the basestations,
and is allocated to one of the basestations for maximizing the
sum-rate at the end of all user arrivals. Each user is allocated
immediately upon arrival, and the association once made cannot be revoked. 

An \emph{online} algorithm allocates users causally without information about
future user arrivals. On the other hand, in an offline algorithm, all
future user arrivals and rates are revealed in the beginning. The performance of an online algorithm is
characterized by its \emph{competitive ratio}, which is the
ratio of the utility of the offline algorithm to that of the online algorithm
\cite{BorodinOnlineBook}. The problem is then to find online
algorithms with smallest possible competitive ratio. 

Associating mobile users to basestations under different utility
models is a classical problem in the
literature. Many utility models have been considered in
prior work including load balancing \cite{DeVeciana2009LoadBalancing,
  Altman2011LoadBalancing, DeVeciana2012userAssociation}, cell
breathing \cite{Bejerano2009CellBreathing}, call admission
\cite{Wu2000MACA} and fairness \cite{Bejerano2004Fairness}, sub-carrier/power allocation either jointly
with base-station allocation \cite{Subramanian2009userAssociation} or without it \cite{Kim2005SpectrumPowerAllocation}\cite{Yates2009SpectrumAllocation}.
Most of the prior work on allocation 
assumes either exact user information or statistics is known
and formulates a joint optimization problem. In contrast, design and
analysis of online algorithms for the basestation association problem do not require any
information or assumption about the statistics of the user's profile.

Online algorithms have been designed for many related problems in literature, e.g.,  load balancing \cite{Azar1998LoadBalancing}, load balancing with deadlines \cite{Moharir2013},  
maximum weight matching \cite{Korula2009}, picking best subset of fixed cardinality \cite{babaioff2007k-Secretary} (called the $k$-secretary problem), and 
multi-partitioning \cite{Nemhauser1978, Lehmann2006combinatorial}. Our contributions are as follows:
\begin{itemize}
\item We prove that the water-filling function that corresponds to
  maximizing the mutual information of parallel Gaussian channel under
  a sum-power constraint is sub-modular. To the best of our knowledge, the sub modularity of the waterfilling algorithm is not known in literature, and is an important result with several ramifications in combinatorial resource allocation.

\item We exploit the sub-modularity of the water-filling function to derive a $2$-competitive online basestation allocation algorithm for any input (possibly chosen by an adversary)
by using results from \cite{Nemhauser1978, Lehmann2006combinatorial}.
\end{itemize}

\section{Sub-Modularity of the Waterfilling Function}
Consider $M$ parallel Gaussian channels 
$$Y_i = X_i + Z_i,\ \ \ i=1,2,\ldots,M,$$ 
with noise variance $\bbE\{Z_i^2\} = N_i > 0$, and sum-power constraint $\bbE\{\sum_{i=1}^M X_i^2\} \le P$. 
To maximize the mutual information between the input and the output
over a subset of channels $S\subseteq \{1,2,\ldots,M\}$, we have to solve 
\begin{align}
  \label{eq:7}
\max\;\; &R(S)=\sum_{i\in S}\log\left(1+\frac{P_i}{N_i}\right)\\
\nonumber \text{subject to }&\sum_{i\in S} P_i \le P, \ P_i \geq 0,\ i\in S. 
\end{align}
The optimal solution to \eqref{eq:7} is given by
$P^*_i=(\nu(S)-N_i)^+$, and $\nu(S)$ is the so-called water level $\nu$ chosen to satisfy
\begin{equation}
\sum_{i\in S} (v(S) - N_i)^+ = P,
\label{eq:14}
\end{equation}
with $(x)^+=x$, for $x>0$ and $0$ otherwise, and the optimal objective function is
\begin{equation}
  \label{eq:8}
  R^*(S)=\sum_{i\in S} \log\left(1+ \frac{(\nu(S)-N_i)^+}{N_i}\right).
\end{equation}
The optimal power allocation $P^*$ is popularly known as the waterfilling, and we call $R^*(S)$ as the waterfilling function. Note that $R^*(S)$ is a set function from the power set of $\{1,2,\ldots,M\}$ to the real numbers. Also note that,
\begin{equation}
  \label{eq:9}
  R^*(S)=\sum_{i\in S: \nu(S)>N_i} \log\left(\frac{\nu(S)}{N_i}\right)=\log\frac{\nu(S)^{|T(S)|}}{\prod_{i\in T(S)}N_i},  
\end{equation}
where 
\begin{equation}
  \label{eq:5}  
T(S)=\{i\in S: \nu(S)>N_i\},
\end{equation}
denotes the channels in $S$ that are allotted non-zero power. Using the definition of $T(S)$ in (\ref{eq:14}), we get that 
\begin{equation}
  \label{eq:13}
  \sum_{i\in T(S)} (v(S) - N_i) = |T(S)|\nu(S)-\sum_{i\in T(S)}N_i = P.
\end{equation}
\begin{definition}\label{definition:submod}
Let $U$ be a finite set, and let $2^U$ be the power set of $U$. A real-valued set function $f:2^U\to\mathbb{R}$ is said to be {\it monotone} if $f(S) \le f(T)$ for $S \subseteq T \subseteq U$,  and {\it submodular} if 
\begin{equation}
  \label{eq:22}
f(S)+f(T)\ge f(S\cap T)+f(S\cup T),\ \forall S,T\in 2^U.   
\end{equation}
An equivalent definition of sub-modularity is 
\begin{equation}\label{eq:usefuldefnsubmod}
f(S \cup \{i\}) + f(S \cup \{j\}) \ge f(S) + f(S \cup \{i, j\})
\end{equation}
for every $S\subseteq U$ and every $i,j\in U\setminus S$ with $i\ne j$. 
\end{definition}

\begin{theorem}\label{thm:submod} Water-filling function $R^*(\cdot)$ is sub-modular.
\end{theorem}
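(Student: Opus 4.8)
The plan is to verify the equivalent diminishing-returns form of submodularity in \eqref{eq:usefuldefnsubmod}. Writing $\Delta_i(A)=R^*(A\cup\{i\})-R^*(A)$ for the marginal gain of adding channel $i$, \eqref{eq:usefuldefnsubmod} says exactly that $\Delta_i(S)\ge \Delta_i(S\cup\{j\})$ for all $S$ and distinct $i,j\notin S$, i.e.\ the gain from a fixed channel cannot increase when the base set is enlarged by one element. The whole proof therefore reduces to a single-element comparison. The structural fact I would establish first is \emph{monotonicity of the water level}: if $A\subseteq B$ then $\nu(A)\ge \nu(B)$. This is immediate from \eqref{eq:14}, since the power-versus-level map $w\mapsto \sum_{\ell\in B}(w-N_\ell)^+$ dominates the one for $A$ term by term, both maps are increasing, and both equal $P$ at their respective water levels; hence the larger set fills to a lower level. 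In particular $\nu(S)\ge\nu(S\cup\{j\})$ and $\nu(S\cup\{i\})\ge\nu(S\cup\{i,j\})$.

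Next I would obtain a closed form for $\Delta_i(A)$ from \eqref{eq:9}. Let $\nu=\nu(A)$ and $\mu=\nu(A\cup\{i\})\le\nu$. Channel $i$ is used iff $N_i<\mu$; if $N_i\ge\mu$ then nothing changes and $\Delta_i(A)=0$. In the generic situation where inserting $i$ simply appends it to the active set, so $|T(A\cup\{i\})|=|T(A)|+1=:k+1$, equation \eqref{eq:13} forces $\mu=(k\nu+N_i)/(k+1)$ and \eqref{eq:9} gives
\[
\Delta_i(A)=(k+1)\log\frac{k\nu+N_i}{k+1}-k\log\nu-\log N_i .
\]
This vanishes when $N_i=\nu$ (channel $i$ just barely inactive) and, for fixed $k$, is increasing in the base level $\nu$ — precisely the behaviour needed, since a larger base set produces a smaller $\nu$ and hence a smaller gain.

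With these pieces the comparison $\Delta_i(S)\ge\Delta_i(S\cup\{j\})$ splits into a short case analysis. If $i$ is inactive in $S\cup\{i\}$, then $N_i\ge\nu(S\cup\{i\})\ge\nu(S\cup\{i,j\})$ by level-monotonicity, so $i$ is inactive in $S\cup\{i,j\}$ too and both gains are $0$. If $j$ is inactive in $S\cup\{j\}$, then $T(S\cup\{j\})=T(S)$ and $\nu(S\cup\{j\})=\nu(S)$, and $j$ remains inactive after $i$ is added, so $\Delta_i(S\cup\{j\})=\Delta_i(S)$. In the remaining case both $i$ and $j$ are active; here I would argue that passing from $S$ to $S\cup\{j\}$ lowers the relevant base level and use the form of the gain above to conclude $\Delta_i(S\cup\{j\})\le\Delta_i(S)$.

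The main obstacle is that the ``generic'' assumption can fail: inserting a low-noise channel $i$ lowers $\mu$ and may push previously active channels of $A$ below the new level, so $|T(A\cup\{i\})|$ need not equal $|T(A)|+1$ and the active count can change nontrivially (a \emph{dropout}); the same complication affects the both-active case when $j$ is added. Handling this cleanly is where I expect the real work to be. My preferred remedy is the integral representation $R^*(A)=\int_0^P \frac{dp}{\nu_A(p)}$, where $\nu_A(p)$ is the water level of $A$ at power budget $p$ (its derivative, the shadow price of power, is $1/\nu_A(p)$). Adding $i$ leaves $\nu_A(p)$ unchanged until the level first reaches $N_i$ and strictly lowers it thereafter, which yields the level-counting formula
\[
\Delta_i(A)=\log\frac{\mu}{N_i}-\int_{\mu}^{\nu}\frac{|T_A(w)|}{w}\,dw,
\qquad \int_{\mu}^{\nu}|T_A(w)|\,dw=\mu-N_i,
\]
valid regardless of dropouts. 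Using this, the submodularity defect equals $\int_0^P\bigl(\tfrac{1}{\nu_{S\cup\{i\}}(p)}+\tfrac{1}{\nu_{S\cup\{j\}}(p)}-\tfrac{1}{\nu_{S}(p)}-\tfrac{1}{\nu_{S\cup\{i,j\}}(p)}\bigr)\,dp$, so it suffices to show that the reciprocal water level is submodular in the set at every fixed power level $p$. This reduces Theorem~\ref{thm:submod} to a pointwise four-level inequality in which the dropout bookkeeping is absorbed into $|T_A(w)|$ and no longer obstructs the monotonicity argument.
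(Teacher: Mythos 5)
Your preliminaries are correct: the water-level monotonicity $\nu(A)\ge\nu(B)$ for $A\subseteq B$, the two easy cases (inactive $i$ or inactive $j$), the marginal-gain formula, and the envelope representation $R^*(A)=\int_0^P dp/\nu_A(p)$ are all valid, and you rightly identify the ``dropout'' phenomenon as the heart of the problem. The fatal flaw is your proposed remedy: you reduce the theorem to the claim that $1/\nu_A(p)$ is submodular in $A$ at \emph{every fixed} power level $p$, and that claim is false. Take $S=\{1\}$ with $N_1=1$, let $N_i=N_j=\epsilon$ with $\epsilon$ small, and take $p=1/2$. Then $\nu_S(p)=3/2$, $\nu_{S\cup\{i\}}(p)=\nu_{S\cup\{j\}}(p)=\epsilon+1/2$ (only the low-noise channel is active), and $\nu_{S\cup\{i,j\}}(p)=\epsilon+1/4$, so the pointwise four-level inequality would require
\begin{equation*}
\frac{2}{\epsilon+\tfrac12}\;\ge\;\frac{1}{\tfrac32}+\frac{1}{\epsilon+\tfrac14},
\end{equation*}
which fails for small $\epsilon$ (as $\epsilon\to 0$ the right side exceeds the left by $2/3$). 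Hence your ``submodularity defect'' integrand is strictly negative on an entire range of $p$; the defect is nonnegative only after integrating over $[0,P]$, because the integrand is positive when $p$ is of order $\epsilon$ and the two regions compensate. Your reduction discards exactly this cancellation, so the statement you plan to prove in the crucial step cannot be proved, and the hard case (both $i$ and $j$ active, with dropouts) remains open in your writeup.

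For contrast, the paper confronts the dropout bookkeeping head-on rather than trying to make it disappear: it keeps the dropout sets explicit (the channels active in $S\cup\{i\}$ but not in $S\cup\{i,j\}$, and those active in $S$ but not in $S\cup\{j\}$), reduces \eqref{eq:usefuldefnsubmod} to the single product inequality \eqref{eq:17}, and then proves \eqref{eq:17} by showing that the two sides, viewed as multisets of factors, have equal cardinality and equal sum (Lemma \ref{lem:sum}, via the power constraint \eqref{eq:13}), with all factors on one side sandwiched between the extreme factors on the other (Lemma \ref{lem:ord}); a majorization argument with Karamata's inequality applied to $-\log$ (Lemma \ref{lem:maj}) then finishes the proof. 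If you wish to salvage your integral representation, you would have to integrate the defect over $p$ rather than bound it pointwise, and carrying out that integration leads you back to comparing products of water levels and noise powers, i.e., essentially back to \eqref{eq:17}.
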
  
\begin{proof} We will show the sub-modularity of $R^*(\cdot)$ by showing that $R^*(\cdot)$ satisfies \eqref{eq:usefuldefnsubmod}. There are four different waterfilling function evaluations in (\ref{eq:usefuldefnsubmod}),  that involve the following four water levels 
 $\nu_i=\nu(S\cup\{i\})$, $\nu_j=\nu(S\cup\{j\})$, $\nu=\nu(S)$, and $\nu_{ij}=\nu(S\cup\{i,j\})$, respectively.
\begin{figure}[ht!]
\begin{center}
\begin{tikzpicture}[every node/.style={scale=1}]
\begin{scope}[node distance=2cm,>=angle 90]
\def\l{0.7}
\draw[thick,draw=brown] (0,0) --node[below]{$j$} (\l,0); 
\def\yi{3*\l+\l/2}
\draw[very thick,draw=brown] (0,0) rectangle node{$N_j$} (\l,\yi); 
\foreach \i in {1,...,6}
{
  \def\x{\l*\i}
  \draw[very thick,draw=brown] (\x,0) --node[below]{\i} (\x+\l,0); 
  \draw[very thick,draw=brown] (\x,0) rectangle node{$N_{\i}$} (\x+\l,\l+\x);   
}
\def\x{7*\l}
\draw[very thick,draw=brown] (\x,0) --node[below]{$i$} (\x+\l,0); 
\draw[thick,draw=brown] (\x,0) rectangle node{$N_i$} (\x+\l,\l); 
\def\y{8*\l}
\draw[very thick,draw=brown] (0,0) -- (0,\y);
\draw[very thick,draw=brown] (\x+\l,0) -- (\x+\l,\y);
\path[fill=gray] (0,\yi) -- ++(\l,0) -- (\l,2*\l) -- ++(\l,0) -- ++(0,\l) -- ++(\l,0) -- ++(0,\l) -- ++(\l,0) -- ++(0,\l/2) -- ++(-4*\l,0) -- (0,\yi);
\def\vij{4*\l+\l/2}
\path[fill=gray] (\x,\l) rectangle (\x+\l,\vij);
\def\vj{\vij+\l}
\path[fill=gray!75!white] (\l,\vij) -- ++(3*\l,0) -- ++(0,\l/2) -- ++(\l,0) -- ++(0,\l/2) -- ++(-4*\l,0) -- (\l,\vij);
\path[fill=gray!75!white] (\x,\vij) rectangle (\x+\l,\vj);
\def\vi{\vj+\l}
\path[fill=gray!50!white] (0,\vij) -- ++(\l,0) -- ++(0,\l) -- ++(4*\l,0) -- ++(0,\l/2) -- ++(\l,0) -- ++(0,\l/2) -- ++(-6*\l,0) -- (0,\vij);
\path[fill=gray!25!white] (\l,\vi) -- ++(5*\l,0) -- ++(0,\l/2) -- ++(\l,0) -- ++(0,\l/2) -- ++(-6*\l,0) -- (\l,\vi);
\draw[very thick,dashed,draw=brown] (\l,\yi) -- (\l,\y);
\draw[very thick,dashed,draw=brown] (\x,\y-\l) -- (\x,\y);
\draw[draw=blue] (0,\vij) -- node[above]{$\nu_{ij}$} (4*\l,\vij);
\draw[draw=blue] (\x,\vij) -- node[above]{$\nu_{ij}$} (\x+\l,\vij);
\draw[draw=blue] (\l,\vj) -- node[above]{$\nu_{i}$} (5*\l,\vj);
\draw[draw=blue] (\x,\vj) -- node[above]{$\nu_{i}$} (\x+\l,\vj);
\draw[draw=blue] (0,\vi) -- node[above]{$\nu_{j}$} (6*\l,\vi);
\draw[draw=blue] (\l,\vi+\l) -- node[above]{$\nu$} (7*\l,\vi+\l);
\end{scope}
\end{tikzpicture}
\end{center}
 \caption{Illustration of water levels, $S=\{1,2,3,4,5,6\}$.}
 \label{fig:waterlevels}
 \end{figure}
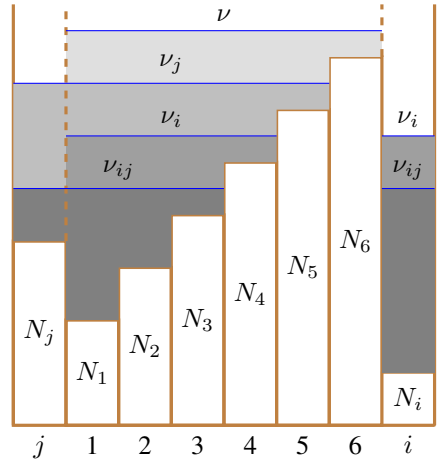
As illustrated  in Fig. \ref{fig:waterlevels}, we have the following relationships among the water levels:
\begin{equation}
  \label{eq:11}
  \nu_{ij}\le \nu_i\le \nu,\ \nu_{ij}\le\nu_j\le \nu.
\end{equation}
Without loss of generality, we assume that $\nu_j\ge\nu_i$. So, we have
\begin{equation}
  \label{eq:3}
  \nu_{ij}\le \nu_i\le \nu_j\le \nu.
\end{equation}
Similarily, let the channels receiving non-zero power be $T=T(S)$,
$T_i=T(S\cup\{i\})$, $T_j=T(S\cup\{j\})$ and
$T_{ij}=T(S\cup\{i,j\})$. Finally, let the optimal rates be $R^*=R^*(S)$, $R^*_i=R^*(S\cup\{i\})$, $R^*_j=R^*(S\cup\{j\})$ and $R^*_{ij}=R^*(S\cup\{i,j\})$. 

Also, the optimal rates satisfy the following relationship:
\begin{equation}
  \label{eq:12}
  R^*_{ij}\ge R^*_i\ge R^*,\ R^*_{ij}\ge R^*_j\ge R^*.
\end{equation}
Using the above relationships, we first settle two easy cases. If $i\notin T_{ij}$, then $R^*_{ij}=R^*_j$, and the submodularity condition (\ref{eq:usefuldefnsubmod}) is seen to be readily satisfied using (\ref{eq:12}). By a similar argument, if $j\notin T_{ij}$, then $R^*_{ij}=R^*_i$ and hence (\ref{eq:usefuldefnsubmod}) is clearly satisfied. 

So, in what follows, we will assume that $i,j\in T_{ij}$. If $i,j\in T_{ij}$, we have $\nu_{ij}>N_i$ and $\nu_{ij}>N_j$, which, by \eqref{eq:11}, results in $\nu_i>N_i$ and $\nu_j>N_j$. This implies that $i\in T_i$ and $j\in T_j$. Now, let $\bar{T}_{ij}=T_{ij}\setminus\{i,j\}$, $\bar{T}_i=T_i\setminus\{i\}$ and $\bar{T}_j=T_j\setminus\{j\}$. From (\ref{eq:11}), it is clear that $\bar{T}_{ij}\subseteq \bar{T}_i$ and $\bar{T}_j\subseteq T$. Further, letting $\hat{T}_i=\bar{T}_i\setminus\bar{T}_{ij}$ and $\hat{T}=T\setminus\bar{T}_j$, we can write
\begin{align}
  \label{eq:1}
  &T_i=\{i\}\sqcup \bar{T}_{ij}\sqcup \hat{T}_i,\ T_j=\{j\}\sqcup \bar{T}_j,\\
\label{eq:2}  &T=\bar{T}_j\sqcup \hat{T},\ \ \ \ \ T_{ij}=\{i,j\}\sqcup \bar{T}_{ij},
\end{align}
where $\sqcup$ denotes disjoint union. 

After a series of simplifications shown in Appendix \ref{sec:simpl-cond-subm}, we reduce the condition for submodularity of $R^*$ to the following:
\begin{equation}
  \nu_i^{|T_i|}\ \nu_j^{|T_j|}\ \prod_{l\in \hat{T}}N_l\ge \nu^{|T|}\ \nu_{ij}^{|T_{ij}|}\ \prod_{m\in \hat{T}_i}N_m.
\label{eq:17}
\end{equation}
The terms involved in \eqref{eq:17} can be ordered in a specific way, and this is stated in the next Lemma:
\begin{Lemma}\label{lem:ord}
The following ordering holds true for every $m\in\hat{T}_i$ and every $l\in\hat{T}$:
\begin{equation}
  \label{eq:4}
  \nu_{ij}\overset{(a)}{\le} N_m\overset{(b)}{\le} \nu_i \overset{(c)}{\le} \nu_j\overset{(d)}{\le} N_l\overset{(e)}{\le} \nu.  
\end{equation}
\end{Lemma}
\begin{proof}
See Appendix \ref{sec:proof-lemma-ord}.
\end{proof}
In particular, by Lemma \ref{lem:ord}, all terms in the LHS of
\eqref{eq:17} are bounded between two terms $N_{m^*}$ and $\nu$ in the
RHS, where $m^*=\arg\max_{m\in\hat{T}_i}N_m$, is the highest
noise level in $\hat{T}_i$. This is a crucial observation, and is
exploited later. The next Lemma states two equalities on the terms involved in \eqref{eq:17}.
\begin{Lemma}\label{lem:sum}
The following relationships hold true:
\begin{align}
    \label{eq:18}
    |T_i|\nu_i+|T_j|\nu_j+\sum_{l\in\hat{T}}N_l &= |T|\nu+|T_{ij}|\nu_{ij}+\sum_{m\in\hat{T}_i}N_m,\\
\label{eq:26} |T_i|+|T_j|+|\hat{T}|& =|T|+|T_{ij}|+|\hat{T}_i|.
\end{align}
\end{Lemma}
\begin{proof}
See Appendix \ref{sec:proof-lemma-sum}.
\end{proof}
In words, Lemma \ref{lem:sum} states that the number of terms and their sum in the LHS and RHS of \eqref{eq:17}, counting multiplicities, are equal. The next Lemma is a general version of the inequality in \eqref{eq:17} with constraints motivated by Lemmas \ref{lem:ord} and \ref{lem:sum}.
\begin{Lemma}
Let $a_i$, $b_i$, $1\le i\le n$, be positive real numbers with $a_1 \leq a_2 \leq \ldots\leq a_n$, $b_1\leq b_2 \leq\ldots\leq b_n$,  and
\begin{equation}
\sum^n_{i=1} a_i = \sum^n_{j=1} b_j. 
\label{eq:28}
\end{equation}
If $\exists$ $k$, $1 \leq k\leq n-1$, for which 
\begin{equation}
a_k\leq b_1 \leq b_2 \leq \ldots \leq b_n \leq a_{k+1}, 
\label{eq:inside}
\end{equation}
then, we have,
\begin{equation}
\prod^n_{j=1} b_j \geq \prod^n_{i=1} a_i. 
\label{eq:prod}
\end{equation}
\label{lem:maj}
\end{Lemma}
\begin{proof}
See Appendix \ref{sec:proof-lemma-maj}.
\end{proof}
To show the sub-modularity of $R^*$, i.e. to prove (\ref{eq:17}), we use Lemma \ref{lem:maj} with suitable
choices for $a_i$, $b_i$. Let $\ba = [a_n\ a_{n-1}\ \cdots\ a_1]$ be defined as follows:
\begin{equation}
  \label{eq:31}
\ba = [\underbrace{\cdots\ \nu\ \cdots}_{|T|\text{ times}}\ \underbrace{\cdots\ N_m\ \cdots}_{m\in\hat{T}_i}\ \underbrace{\cdots\ \nu_{ij}\ \cdots}_{|T_{ij}|\text{ times}}],  
\end{equation}
where the numbers in the set $\{N_m:m\in\hat{T}_i\}$ are arranged in non-increasing order. Similarly, the vector $\bb = [b_n\ b_{n-1}\ \cdots\ b_1]$ is defined to be
\begin{equation}
  \label{eq:32}
  \bb = [\underbrace{\cdots\ N_l\ \cdots}_{l\in\hat{T}}\ \underbrace{\cdots\ \nu_j\ \cdots}_{|T_j|\text{ times}}\ \underbrace{\cdots\ \nu_{i}\ \cdots}_{|T_i|\text{ times}}],  
\end{equation}
where, once again, the numbers in the set $\{N_l:l\in\hat{T}\}$ are
arranged in non-increasing order. Firstly, $\ba$ and
$\bb$ defined above have the same length by \eqref{eq:26} in Lemma
\ref{lem:sum}. Further, by Lemmas \ref{lem:ord} and \ref{lem:sum},
$\ba$ and $\bb$ satisfy the constraints \eqref{eq:28}, \eqref{eq:inside} of Lemma \ref{lem:maj}. Hence, by the result of Lemma \ref{lem:maj}, we have 
$$\prod^n_{j=1} b_j= \nu_i^{|T_i|}\ \nu_j^{|T_j|}\ \prod_{l\in \hat{T}}N_l\ge \prod^n_{i=1} a_i=\nu^{|T|}\ \nu_{ij}^{|T_{ij}|}\ \prod_{m\in\hat{T}_i}N_m.$$
\end{proof}

Next, we present an important problem of online basestation allocation, where we use the sub-modularity of water-filling to show that the sum-rate obtained by a greedy online algorithm is no less than $2$ times that of an optimal offline algorithm.

\section{Online Basestation Allocation}
\label{sec:basest-alloc-probl}
Consider the scenario where $n$ mobile users arrive one at a time into
a geographical area with $m$ basestations with $m<n$. User $i$ has a
non-negative weight $w_{ij}$ to basestation $j$, which is the
signal-to-noise ratio (SNR) of the user's channel to the basestation. 
The weights $w_{ij}$ are collected into an $n\times m$ matrix $W$ for
easy reference. 

In the downlink basestation allocation problem, each user is to be
allotted to exactly one of the $m$ basestations. An arbitrary
allocation is specified by the family of sets $\M=\{M_j:
1\le j\le m\}$, where $M_j$ is the set of users allotted to
basestation $j$. Note that the sets $M_j$ partitions the set of
users. The goal is to maximize a utility function over the allocations $\M$.

We will enforce two important conditions on the allocation policy: (1)
each user needs to be allotted to a
basestation immediately upon arrival, and (2) this allocation is
irrevocable and cannot be altered subsequently. 
\subsection{Offline versus online allocation}
In an {\it offline} allocation problem, the entire weight matrix $W$
is available ahead of time. So, an offline algorithm solves a
well-defined optimization of the utility over all possible
allocations. Let us denote the optimal offline utility by $R^*_\off(W)$.

In an {\it online} allocation problem, the users arrive in an
arbitrary order, and weights of user $i$ are revealed only upon
arrival. i.e. the $i$-th row of $W$ is revealed when user $i$
arrives. 
 
Online algorithms are, by definition, poorer than offline
algorithms. {\it Competitive ratio} is a popular figure of merit used
to characterize and compare online and offline algorithms
\cite{BorodinOnlineBook}. For a given weight matrix $W$, the
competitive ratio of an online algorithm $\A$ is defined as 
$\eta_W(\A) = R^*_{\off}(W)/R_{\A}(W)$,
where $R_{\A}(W)$ denotes the utility of the online algorithm $\A$ on
input $W$. Since  $R^*_{\off}(W)\ge R_{\A}(W)$, we have $\eta_W(\A)\ge 1$ for all
$W$ and $\A$. The worst-case competitive ratio of an online algorithm $\A$ is defined as
$\eta_\worst(\A) = \max_W \eta_W(\A)$. 
The design goal is to have online algorithms with a competitive ratio as close to 1 as possible. 
\subsection{Utility and greedy online algorithm}
We suppose that each basestation has unit transmit power that is allocated to each of its connected users such that it maximizes the $\log$-utility.
Thus, to
maximize the rate, basestation $j$ allocates power $\alpha_{ij}$ to all its connected users $i\in M_j$ by
performing the following maximization:
\begin{equation}
  \label{eq:10}
 L(M_j)=\max_{\alpha_{ij}, \sum_{i\in M_j}\alpha_{ij}\le 1}\sum_{i\in M_j}\log(1+\alpha_{ij}w_{ij}), 
\end{equation}
where the weight $w_{ij}$ is the channel SNR of user $i$ to
basestation $j$. We consider the online basestation allocation
problem, where each user on its arrival is assigned to one of the base stations, i.e. to one of sets $M_j, j=1, \dots, m$ without knowing any information about the future user arrivals, so as to maximize the overall utility
 \begin{equation}
  \label{eq:logutility}
  LS(\M,W)=\sum_{j=1}^mL(M_j),
\end{equation}
which is the sum rate of the entire system over all basestations. 

We propose an online algorithm for basestation allocation that
maximizes the sum-rate \eqref{eq:logutility} at a worst-case
competitive ratio of at most $2$. The algorithm and proof use results
on the multi-partitioning problem in \cite{Nemhauser1978, Lehmann2006combinatorial}.

\begin{definition}{\bf Multi-partitioning problem:}
The problem is to partition a given set $\bS$  into $k$ subsets $S_1, \dots, S_k, S_i \cap S_j = \phi, \cup_{i=1}^k S_i = \bS$ such that $\sum_{i=1}^k f_i(S_i)$ is maximized.
\end{definition}

\begin{table}
\begin{tabular}{r l}
\hline
& \textbf{ONLINE GREEDY ALLOCATION}\\
\hline
1	&	Initialize  $S_j = \Phi$, $j =1,\ldots, k$, $i=1$. \\
2	&	Find $j^*=\arg\max_{j}f_{j}(S_j\cup\{i\})$.	\\
3 &	Update $S_{j^*} = S_{j^*} \cup \{i\}$.\\
4	&	Set $i=i+1$, Stop if $i>|\bS|$, otherwise go to step 2.\\
5	&	Return $S_j$, $j =1,\dots, k$.\\
\hline
\end{tabular} 
\label{tab:greedy}
\end{table}

\begin{theorem}[\cite{Nemhauser1978,Lehmann2006combinatorial}]\label{theorem:Nemhauser}
  If all functions $f_i$ in the multi-partitioning problem are
  non-negative, monotone and sub-modular, then the ONLINE GREEDY ALLOCATION
  algorithm (presented above) has a competitive ratio of at
  most $2$.
\end{theorem}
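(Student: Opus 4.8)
The plan is to compare the greedy partition against an optimal one through a marginal-gain accounting argument, which is the standard route to the $\tfrac12$-approximation (equivalently, competitive ratio $2$) of greedy on non-negative monotone submodular objectives. Write $(G_1,\dots,G_k)$ for the partition produced by ONLINE GREEDY ALLOCATION and $(O_1,\dots,O_k)$ for an optimal offline partition of $\bS$, and set $\mathrm{GR}=\sum_j f_j(G_j)$ and $\mathrm{OPT}=\sum_j f_j(O_j)$; the goal is $\mathrm{OPT}\le 2\,\mathrm{GR}$. The first step is to record, for each element $i\in\bS$, its greedy marginal gain $\delta_i := f_{a(i)}(S_{a(i)}\cup\{i\})-f_{a(i)}(S_{a(i)})$, where $a(i)$ is the part chosen for $i$ and $S_{a(i)}$ is that part's content at the instant $i$ arrives. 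Telescoping the increments within each part and using $f_j(\emptyset)=0$ gives $\mathrm{GR}=\sum_{i\in\bS}\delta_i$ (non-negativity alone would give $\mathrm{GR}\ge\sum_i\delta_i$, which is all that is needed below).

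The second step extracts the key greedy property: element $i$ is placed in the part giving it the largest marginal increase, so for every part $j$ we have $\delta_i \ge f_j(S_j\cup\{i\})-f_j(S_j)$, with $S_j$ the state of part $j$ when $i$ arrives. Since $S_j\subseteq G_j$, submodularity (decreasing marginal gains) lets me push this bound forward to the final greedy set: for $i\notin G_j$, $f_j(G_j\cup\{i\})-f_j(G_j)\le f_j(S_j\cup\{i\})-f_j(S_j)\le\delta_i$. I would invoke this with $j$ the part for which $i\in O_j$, so that the marginal value of each optimal element against the final greedy set is controlled by its own $\delta_i$.

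The third step bounds $\mathrm{OPT}$ part by part. For each $j$, monotonicity gives $f_j(O_j)\le f_j(G_j\cup O_j)$, and the standard submodular decomposition $f_j(G_j\cup O_j)-f_j(G_j)\le\sum_{i\in O_j\setminus G_j}\bigl(f_j(G_j\cup\{i\})-f_j(G_j)\bigr)$ replaces the joint gain by a sum of single-element gains. Combining with Step 2 yields $f_j(O_j)\le f_j(G_j)+\sum_{i\in O_j\setminus G_j}\delta_i$. Summing over $j$ and using that the $O_j$ partition $\bS$ — so the sets $O_j\setminus G_j$ are disjoint and each $\delta_i$ is charged at most once — the cross term is at most $\sum_{i\in\bS}\delta_i=\mathrm{GR}$, whence $\mathrm{OPT}\le \mathrm{GR}+\mathrm{GR}=2\,\mathrm{GR}$.

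The main obstacle is the careful, twofold use of submodularity: in Step 2 I must transport marginal gains backward in time (current state $\subseteq$ final set), and in Step 3 I must decompose the joint optimal gain into single-element gains, so that both optimal and greedy contributions are expressed in the common currency $\delta_i$. The rest is bookkeeping: monotonicity supplies the one inequality $f_j(O_j)\le f_j(G_j\cup O_j)$, non-negativity (with $f_j(\emptyset)=0$) identifies $\sum_i\delta_i$ with $\mathrm{GR}$, and the partition structure of the optimum is exactly what prevents double-charging and thereby caps the cross term by a single copy of $\mathrm{GR}$, producing the factor $2$.
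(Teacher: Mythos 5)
Your proof is correct, but note first that the paper itself contains no proof of this theorem: it is imported with a citation to \cite{Nemhauser1978,Lehmann2006combinatorial} and used as a black box, so the only fair comparison is with the classical argument in those references. What you have written is essentially that argument (Fisher--Nemhauser--Wolsey for partition matroids, Lehmann--Lehmann--Nisan for allocation with submodular valuations): telescope the greedy increments $\delta_i$, transport each element's marginal value to the final greedy sets via diminishing returns, decompose $f_j(G_j\cup O_j)-f_j(G_j)$ into single-element gains, and use disjointness of the $O_j$ to charge each $\delta_i$ at most once. All uses of the hypotheses are sound: non-negativity gives $\mathrm{GR}\ge\sum_i\delta_i$, monotonicity gives $f_j(O_j)\le f_j(G_j\cup O_j)$ (and also $\delta_i\ge 0$, which you need implicitly when dropping the terms $i\in O_j\cap G_j$), and submodularity is used correctly in both of the places you flag.

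One point needs flagging, because it is exactly where your Step 2 lives. You assume the greedy rule is ``place $i$ in the part where its \emph{marginal} gain $f_j(S_j\cup\{i\})-f_j(S_j)$ is largest.'' The pseudocode in the paper literally says $j^*=\arg\max_{j}f_{j}(S_j\cup\{i\})$, i.e.\ it maximizes the resulting total value, not the increment, and the two rules differ as soon as the provisional parts have unequal values. The theorem is in fact false for the literal max-value rule: take $k=2$ and modular (hence submodular), monotone, non-negative functions on $\{1,2,3\}$, where $f_1$ assigns per-element values $(10,0,0)$ and $f_2$ assigns $(1,9,9)$. With arrival order $1,2,3$, max-value greedy puts every element into part $1$ (since $10>9$ and $10>1$ at each step), ending with total value $10$, whereas the allocation $O_1=\{1\}$, $O_2=\{2,3\}$ achieves $10+18=28>2\cdot 10$. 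The marginal-gain rule you analyze is the one actually treated in the cited references (and it handles this instance optimally), so your reading is the right one; the paper's pseudocode should be read, or corrected, accordingly. With that understanding, your proof is complete and needs no repair.
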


Note that the sum-rate maximization problem \eqref{eq:logutility} can be thought of as a multi-partioning problem, 
where $f_j = L(M_j)$. To make use of 
Theorem \ref{theorem:Nemhauser}, we need to check that the function
$L(M_j)$ is sub-modular. Now, note that \eqref{eq:10} is a special
case of the water-filling function \eqref{eq:7}, where $P_i$ plays the
role of $\alpha_{ij}$, sum-power constraint $P=1$, and $w_{ij} = \frac{1}{N_i}$. Also, note
that \eqref{eq:10} is non-negative and monotone. Therefore, we have
the following:
\begin{theorem}\label{theorem:finaltheorem}
  The competitive ratio  of the ONLINE GREEDY ALLOCATION
  algorithm is $\le2$ for maximizing the sum-rate
  \eqref{eq:logutility}.
\end{theorem}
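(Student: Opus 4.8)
The plan is to verify that the sum-rate utility \eqref{eq:logutility} fits the template of Theorem \ref{theorem:Nemhauser}, so that the competitive-ratio bound follows immediately. Concretely, I would cast the allocation problem as a multi-partitioning problem over the ground set of users $\bS = \{1, \ldots, n\}$, with one set function $f_j = L(\cdot)$ per basestation, and then check the three hypotheses of Theorem \ref{theorem:Nemhauser}: non-negativity, monotonicity, and sub-modularity of each $f_j$.

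The first step is the reduction to water-filling. For a fixed basestation $j$, the maximization \eqref{eq:10} is precisely the water-filling problem \eqref{eq:7} over the set of users, once we identify the power variables $\alpha_{ij}$ with the $P_i$, set the sum-power budget to $P = 1$, and take the per-user noise level to be $N_i = 1/w_{ij}$. Under this identification $L(M_j) = R^*(M_j)$, where $R^*$ is the water-filling function of Section~II computed with the noise profile $\{1/w_{ij} : i\}$ associated with basestation $j$. Since this profile is just a collection of positive reals (assuming $w_{ij} > 0$; a user with $w_{ij} = 0$ contributes a vanishing term and can be treated as having unboundedly large noise), Theorem \ref{thm:submod} applies verbatim and yields that each $f_j = L(\cdot)$ is sub-modular.

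Next I would dispatch the two remaining, elementary, properties. Non-negativity is immediate: every summand $\log(1 + \alpha_{ij} w_{ij})$ in \eqref{eq:10} is at least $\log 1 = 0$ because $\alpha_{ij}, w_{ij} \ge 0$, so $L(M_j) \ge 0$. Monotonicity follows from a feasibility argument: if $S \subseteq T$, then an optimal power allocation for $S$ extends to a feasible allocation for $T$ by assigning zero power to the users in $T \setminus S$, and this extension achieves rate $L(S)$; since $L(T)$ is the maximum over a superset of feasible allocations, $L(T) \ge L(S)$.

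With all three hypotheses in hand, the conclusion is simply an application of Theorem \ref{theorem:Nemhauser} with $f_j = L(\cdot)$: the ONLINE GREEDY ALLOCATION algorithm achieves a worst-case competitive ratio of at most $2$ for \eqref{eq:logutility}. I do not anticipate a genuine obstacle, since the hard analytic work---the sub-modularity of water-filling---was already discharged in Theorem \ref{thm:submod}; the only point requiring care is to note that the noise profile $\{1/w_{ij}\}$ varies with $j$, so that each $f_j$ is a \emph{distinct} water-filling function, yet every one of them is covered by the same sub-modularity result.
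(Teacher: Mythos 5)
Your proposal is correct and follows essentially the same route as the paper: cast \eqref{eq:logutility} as a multi-partitioning problem with $f_j = L(M_j)$, identify \eqref{eq:10} as water-filling with $P=1$ and $N_i = 1/w_{ij}$ so that Theorem \ref{thm:submod} gives sub-modularity, and invoke Theorem \ref{theorem:Nemhauser}. The only difference is that you spell out the non-negativity and monotonicity checks (and the per-basestation noise-profile remark) that the paper merely asserts, which is a harmless strengthening.
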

\begin{proof} Follows from Theorem \ref{theorem:Nemhauser}, and Theorem \ref{thm:submod}, where $|\bS| = n$, $S_j = M_j$, $j=1,\dots,m$, and $f_j = L(M_j)$ for the ONLINE GREEDY ALLOCATION algorithm.
\end{proof}

\section{Simulation Results}
In Fig. \ref{fig:logutility}, we plot an upper bound on the competitive ratio for the GREEDY ALLOCATION algorithm with $m = 10$ basestations 
under different user SNR profiles. To upper bound the competitive
ratio, we upper bound the utility of the offline algorithm by assuming
that each user has all SNRs equal to maximum SNR across all the users
and all the basestations. Under this assumption, allocating equal number of users to each basestation is optimal. First, we consider the i.i.d. case, where
each user's SNR is distributed uniformly in either $[0, 1]$ or $[0,
10]$ to all the basestations. We see that in both the cases the
competitive ratio upper bound is very close to $1$, which is
significantly better than the derived worst case bound of $2$.

Next, we consider two non-i.i.d. cases for the user SNRs -- 
case $1$, where all the SNRs of half of the users 
are uniformly distributed in $[0, 10]$, while the SNRs of the other half of 
the users are uniformly distributed in $[0, 5]$; and case $2$, where
for each user, SNRs to 3 randomly selected basestations are uniform in
$[0, 10]$ and other SNRs are uniform in $[0, 1]$. 
This creates users SNRs with different orders of 
magnitude. Here also the competitive ratios are very close to unity.

Finally, we consider a case of correlated user SNRs, 
where for each user 3 randomly selected basestations have 
the same SNR values while the other basestations have half that value.
We see that even for this highly correlated case, competitive ratio is 
below the stipulated value of $2$, which clearly 
validates Theorem \ref{theorem:finaltheorem}.

For comparison we also plot the competitive ratios of MAX-WEIGHT strategy, 
which assigns users to the basestation to which it has the highest 
SNR, for the i.i.d. case 1 and the correlated case. 
In the former the competitive ratios are the same for 
both the algorithms, but in the latter, GREEDY algorithm performs
significantly better than the MAX-WEIGHT strategy.

\begin{figure}
\centering
\includegraphics{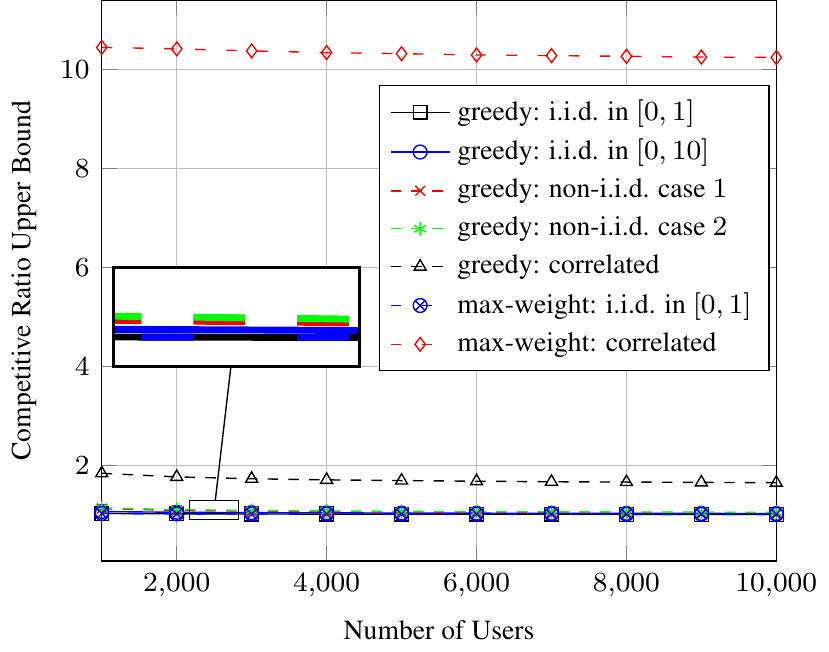}
\caption{Competitive ratio of different algorithms with $m=10$ basestations.}
\label{fig:logutility}
\end{figure} 

\appendices
\section{}
\label{sec:simpl-cond-subm}
Using (\ref{eq:9}), in order for the submodularity condition (\ref{eq:usefuldefnsubmod}) to be true, we need to show that
$$\log\frac{\nu_i^{|T_i|}}{\prod_{k\in T_i}N_k}+\log\frac{\nu_j^{|T_j|}}{\prod_{k\in T_j}N_k}  
\ge \log\frac{\nu^{|T|}}{\prod_{k\in T}N_k}+\log\frac{\nu_{ij}^{|T_{ij}|}}{\prod_{k\in T_{ij}}N_k}, $$
 which simplifies to
\begin{equation}
  \label{eq:16}
  \frac{\nu_i^{|T_i|}}{\prod_{k\in T_i}N_k}\ \frac{\nu_j^{|T_j|}}{\prod_{k\in T_j}N_k}\ge \frac{\nu^{|T|}}{\prod_{k\in T}N_k}\ \frac{\nu_{ij}^{|T_{ij}|}}{\prod_{k\in T_{ij}}N_k}.
\end{equation}
Using the decompositions in \eqref{eq:1}~-~\eqref{eq:2} for the products in the denominator of (\ref{eq:16}), we get
\begin{align*}&\frac{\nu_i^{|T_i|}}{N_i\prod_{k\in \bar{T}_{ij}}N_k\prod_{m\in \hat{T}_i}N_m}\  \frac{\nu_j^{|T_j|}}{N_j\prod_{k\in \bar{T}_j}N_k} \\ 
&\ge \frac{\nu^{|T|}}{\prod_{k\in \bar{T}_j}N_k\prod_{l\in \hat{T}}N_l}\ \frac{\nu_{ij}^{|T_{ij}|}}{N_iN_j\prod_{k\in \bar{T}_{ij}}N_k},
\end{align*}
which simplifies to \eqref{eq:17}, after canceling $N_iN_j\prod_{k\in \bar{T}_{ij}}N_k\prod_{k\in \bar{T}_j}N_k$ and cross multiplication.
\section{Proof of Lemma \ref{lem:ord}}
\label{sec:proof-lemma-ord}
We use the definition of the set $T(\cdot)$ in \eqref{eq:5}. Since $m\in\hat{T}_i=\bar{T}_i\setminus\bar{T}_{ij}$, we have that $m\in T_i$, but $m\notin T_{ij}$. So, clearly, $N_m\le\nu_i$ and $\nu_{ij}\le N_m$. This proves the inequalities $(a)$ and $(b)$. The inequality $(c)$ follows by the assumption in \eqref{eq:3}. 

Since $l\in\hat{T}=T\setminus\bar{T}_j$, we have that $l\in T$, but $l\notin T_j$. So, clearly, $N_l\le\nu$ and $\nu_j\le N_l$. This proves the final two inequalities $(d)$ and $(e)$.
\section{Proof of Lemma \ref{lem:sum}}
\label{sec:proof-lemma-sum}
  Using \eqref{eq:13}, we have 
  \begin{align}
    \label{eq:6}
    |T_i|\nu_i=P+\sum_{k\in T_i}N_k,\ \ &|T_j|\nu_j=P+\sum_{k\in T_j}N_k,\\
\label{eq:25}    |T|\nu=P+\sum_{k\in T}N_k,\ \ &|T_{ij}|\nu_{ij}=P+\sum_{k\in T_{ij}}N_k.
  \end{align}
From \eqref{eq:6} and \eqref{eq:25}, we have
\begin{equation}
  \label{eq:15}
  |T_i|\nu_i+|T_j|\nu_j+\sum_{l\in\hat{T}}N_l=2P+\sum_{k\in T_i}N_k+\sum_{k\in T_j}N_k+\sum_{l\in\hat{T}}N_l.
\end{equation}
Using the decompositions for $T_i$ and $T_j$ in \eqref{eq:1}, we have 
\begin{align}
  \label{eq:19}
  \sum_{k\in T_i}N_k&=N_i+\sum_{k\in \bar{T}_{ij}}N_k+\sum_{m\in \hat{T}_i}N_m,\\
\label{eq:20} \sum_{k\in T_j}N_k&=N_j+\sum_{k\in \bar{T}_j}N_k.
\end{align}
Using \eqref{eq:19}, \eqref{eq:20} in the RHS of \eqref{eq:15} and rearranging, we get
\begin{align}
\nonumber  |T_i|\nu_i+|T_j|\nu_j+&\sum_{l\in\hat{T}}N_l=P+\sum_{k\in \bar{T}_j}N_k+\sum_{l\in\hat{T}}N_l+\\ 
 \label{eq:21}                                                                  &P+N_i+N_j+\sum_{k\in \bar{T}_{ij}}N_k+\sum_{m\in \hat{T}_i}N_m. 
\end{align}
Using \eqref{eq:2} and \eqref{eq:25}, the RHS of \eqref{eq:21} and \eqref{eq:18} 
are seen to be equal, proving \eqref{eq:18}. To prove \eqref{eq:26}, we use \eqref{eq:1} to get
\begin{align}
\nonumber  |T_i|+|T_j|+|\hat{T}|&=1+|\bar{T}_{ij}|+|\hat{T}_i|+1+|\bar{T}_j|+|\hat{T}|,\\
\nonumber &=(|\bar{T}_j|+|\hat{T}|)+(2+|\bar{T}_{ij}|)+|\hat{T}_i|,\\
  \label{eq:30}&=|T|+|T_{ij}|+|\hat{T}_i|.
\end{align}
\section{Proof of Lemma \ref{lem:maj}}
\label{sec:proof-lemma-maj}
The first step in the proof is to show that the conditions on $a_i$,
$b_j$ imply that the vector $[a_n\ a_{n-1}\ \cdots\ a_1]$ majorizes
the vector $[b_n\ b_{n-1}\ \cdots\ b_1]$. Since the sum of the two
vectors are equal (from (\ref{eq:28})), we only need to show, for $1\le i\le n$,
\begin{equation}
  \label{eq:23}
  A_i\triangleq a_n+a_{n-1}+\cdots+a_i\ge B_i\triangleq b_n+b_{n-1}+\cdots+b_i.
\end{equation}
From (\ref{eq:inside}), it is clear that $A_i\ge B_i$ for $k+1\le i\le n$. For $1\le i\le k$, we have $a_i\le b_i$, or 
\begin{equation}
  \label{eq:24}
  A_i-A_{i+1}\le B_i-B_{i+1}.
\end{equation}
Summing \eqref{eq:24} from $i-1$ down to $i=1$, we get
\begin{equation}
  \label{eq:27}
  A_1-A_i\le B_1-B_i,\ 1\le i\le k.
\end{equation}
Since $A_1=B_1$, we have $A_i\ge B_i$, $1\le i\le k$. This proves that $[a_n\ a_{n-1}\ \cdots\ a_1]$ majorizes $[b_n\ b_{n-1}\ \cdots\ b_1]$. 

The second step is to invoke Karamata's inequality \cite{zbMATH03006493}\cite{kadelburg2005inequalities}, which states that 
\begin{equation}
g(a_1)+g(a_2)+\cdots+g(a_n) \ge g(b_1)+g(b_2)+\cdots+g(b_n),
\label{eq:29}
\end{equation}
for any convex function $g$, if $[a_n\ a_{n-1}\ \cdots\ a_1]$
majorizes $[b_n\ b_{n-1}\ \cdots\ b_1]$. Using (\ref{eq:29}) with
$g(\cdot)=-\log(\cdot)$, the claim of the Lemma in (\ref{eq:prod}) follows.

\bibliographystyle{IEEEtran}
\bibliography{IEEEabrv,Research}

\end{document}